\title{Matrix recovery from bilinear and quadratic measurements}
\author{Michalina Pacholska, Karen Adam, Adam Scholefield, Martin Vetterli}
\date{\today}
\def\keywords{\vspace{.5em}
{\textit{Keywords}:\,\relax%
}}
\begin{document}
\maketitle

\begin{abstract}
Matrix (or operator) recovery from linear measurements is a 
well-studied problem. However, there are situations where only bilinear or quadratic measurements are available. A bilinear or quadratic problem can easily be transformed into a linear one, but it raises questions when the linearized problem is solvable and what is the cost of linearization.

In this work, we study a few specific cases of this general problem and show when the~bilinear problem is solvable. Using this result and certain properties of polynomial rings, we present a scenario when the~quadratic problem can be linearized at the cost of just a linear number of additional measurements. 

Finally, we link our results back to two applications that inspired it:
Time Encoding Machines and Continuous Localisation.
\end{abstract}

\keywords{matrix recovery, bilinear and quadratic functionals, product of~
frames, polynomials, polynomial rings, bandlimited functions, matrix rank, 
linearization, linear system of equations, SLAMpling}

\section{Introduction}
Given a sufficient number of linear equations, the problem of matrix recovery (or completion) does not differ from any other linear problem. However, when the problem is ill-posed, regularisations specific to matrix recovery are used. For example, a~low-rank assumption is often used in matrix completion~\cite{candes2009exact}, which leads to non-convex problems that are often solved with convex relaxation~\cite{candes2010power}.

Matrix recovery also appears in multidimensional signal estimation, for example in multi-channel audio processing or in source separation~\cite{Cardoso1996equivariant, Vincent2006performance}, where the signal coefficients form a matrix with different rows corresponding to different dimensions/sources. 

Such problems are rarely stated as matrix reconstruction from \emph{bilinear} 
measurements not because of their structure (which often fits the bilinear formulation), but because they can be seen as a 
special case of matrix recovery from linear measurements in the larger space. 
Consider for example  an unknown matrix \(\vect{C}: \mathbb{R}^{J\times K}\) that we would like to reconstruct. We can treat \(\vect{C}\) as a vector -- an element of \(\mathbb{R}^{JK} \), and try to solve a linear system of equations for \(\vect{C}\). 

There are situations when this approach is sufficient, for example if 
the measurements have the form \(b_{jk}=\vect{\phi}_j^\top \vect{C} 
\vect{\psi}_k\), where \(\vect{\phi}_j\) and \(\vect{\psi}_k\) are bases of \(\mathbb{R}^J\) and 
\(\mathbb{R}^K\) respectively. In this case, to reconstruct \(\vect{C}\) we need 
all \(JK\) possible measurements (without any regularizers or priors). However, having a certain number of measurements is not sufficient to guarantee recovery. Indeed, consider the situation when \(\vect{\psi}\) is a frame consisting 
of \(JK\) vectors. Then there are \(JK\) measurements of the form 
\(b_{k}=\vect{\phi}_0^\top \vect{C} \vect{\psi}_k\), \(k=0 \dots JK-1\), but 
they are insufficient to recover \(\vect{C}\), at least if there are no assumptions on the structure of~\(\vect{C}\).

A second problem we consider is matrix reconstruction from quadratic 
measurements, where \(J=K\) and the measurements have the~form 
\(b_{k}=\vect{\psi}_k^\top \vect{C} \vect{\psi}_k\), with \(\vect{\psi}_k \in 
\mathbb{R}^K\). Quadratic measurements appear for~example in phase retrieval \cite{miao1998phase, fienup1982phase}.

We introduce the problem and assumptions used through this paper in~Section~\ref{sec:problem}.
In Section \ref{sec:bilinear}, we focus on bilinear measurements. We analyze if the set of measurements is sufficient to reconstruct the matrix in the case when one set of vectors is 
a frame and the~second is a set of pair-wise different vectors. We apply 
our theory to~the problem of encoding mixed bandlimited signals, studied in the 
context of Time Encoding Machines (TEMs) \cite{adam2019encoding}.

In Section \ref{sec:quadratic}, we consider a combination of bilinear and 
quadratic measurements, with additional assumptions on the measurement vectors 
\(\vect{\psi}_k\). These assumptions are based on the properties of polynomials, or more generally, polynomial rings. 

Finally, we show how the quadratic case applies to continuous localization from range measurements \cite{pacholska2019relax}. We introduce these applications only briefly, and for details refer the reader to corresponding publications. We think, however, that applications help motivate the assumptions we make in this work, which may seem arbitrary without further context. 
\section{Problem statement}
\label{sec:problem}
In this section we introduce the most general problem we consider, show how to linearize it and introduce the specific assumptions we make in this work. The problem, as well as the assumptions, are inspired by the Continuous Localisation \cite{pacholska2019relax}.

We consider the problem of recovering a matrix \(\vect{C}\in \mathbb{R}^{J\times K}\) from \(N\) measurements of the form
\begin{equation}
\label{eq:problem_statement}
b_n = \vect{g}_n^\top \vect{C}\vect{f}_n + \vect{f}_n^\top \vect{L}\vect{f}_n,
\end{equation}
where \( \vect{g}_n\) and \(\vect{f}_n\) are known vectors in \(\mathbb R^J\) 
and \(\mathbb R^K\), respectively, \(b_n\) are measured scalars and \(\vect{L} \in \mathbb{R}^{K\times K}\) is an unknown matrix that does not need to be recovered.
In this work we only consider the noiseless case.

This problem can be also interpreted as a problem of recovering a bilinear operator \(C(\vect{f},\vect{g}) = \vect{g}^\top \vect{C} \vect{f}\) in the presence of the quadratic term \(L(\vect{f}) = \vect{f}^\top \vect{L} \vect{f}\). In this work, we only use the matrix representation.

We transform \eqref{eq:problem_statement} into a system of linear equations using properties of the trace. Since both elements of the sum in \eqref{eq:problem_statement} are scalars we can write \(\tr(\vect{g}_n^\top \vect{C}\vect{f}_n)=\tr((\vect{g}_n^\top \vect{C})^\top \vect{f}_n^\top)=\tr(\vect{C}^\top \vect{g}_n \vect{f}_n^\top)=\vectorised(\vect{g}_n \vect{f}_n^\top)^\top\vectorised(\vect{C})\) and similarly for the quadratic part. We then obtain a set of \(N\) linear equations:
\begin{equation}
    b_n = \vectorised(\vect{g}_n \vect{f}_n^\top)^\top\vectorised(\vect{C})
    +  \vectorised(\vect{f}_n \vect{f}_n^\top)^\top\vectorised(\vect{L}),
\end{equation}
where the linear transformation \(\vectorised : \mathbb{R}^{J\times K} \rightarrow \mathbb{R}^{JK}\) flattens a matrix into a~vector.

\subsection{General Assumptions}
\label{sec:assumptions}
Of course, measurements have to be pairwise 
different, that is there are no \(n\), \(m\) such that \(\vect{g}_n=\vect{g}_m\) 
and \(\vect{f}_n = \vect{f}_m\). 
We use stronger assumptions. In 
particular, we assume vectors \(\vect{f}_n\) can be parameterized by one variable \(t \in \mathbb R\), which we will call 
time. More precisely, we assume the \(k\)-th entry of \(\vect{f}_n\) has the 
form
\begin{equation}
\label{eq:model_form}
[\vect{f}_n]_k = f_k(t_n),
\end{equation}
where \(f_k : \mathcal{I} \rightarrow \mathbb R\)
\(k=0, \dots, K-1\) are \emph{linearly independent} functions from a linear space of 
functions~\(\mathcal F\), \(\mathcal I \in \mathbb R\) is an interval or the whole real line and  and \(t_n \in \mathcal I\), \(n=0\dots N-1\) are sampling times. 

Moreover, we assume that the sampling times \((t_0, 
\dots, t_{N-1})\) follow a continuous probability 
distribution on \(\mathcal{I}^N\) and that for every non-zero element \(f \in 
\mathcal F\), the set of zeros of \(f\) has Lebesgue measure (\(\lambda\)) 
equal to zero: \(\lambda(\{t| f(t)=0\})=0\).\footnote{It can be shown that if \(\mathcal{F}\) contains a constant function, then the above assumptions on \(\mathcal{F}\) and \(t_n\) guarantee that the vectors \(\vect{f}_n\) are pairwise different with probability one.}

Finally, let us note that we could consider \(\vect{C} \in \mathbb{C}^{J\times K}\), \(\vect{g}_n \in \mathbb{C}^J\) and \(\vect{f}_n \in \mathbb{C}^K\) and the results presented in this paper would still apply.


\section{Bilinear Measurements}
\label{sec:bilinear}

In this section we only consider the bilinear measurement problem -- we assume that there is no quadratic term (\(\vect{L}=0\)). Therefore we want to find \(\vect{C}\) such that 
\begin{equation}
b_n = \vect{g}_n^\top \vect{C}\vect{f}_n.
\end{equation}
Equivalently, we want to solve the following system of equations:
\begin{equation}
    b_n = \vectorised(\vect{g}_n \vect{f}_n^\top)^\top \vectorised(\vect{C}).
\end{equation}

This system of equations can be solved if at least \(JK\) of them are independent, in other words if there are \(JK\) independent vectors \(\vectorised(\vect{g}_n \vect{f}_n^\top)\). Theorem \ref{thm:bilinear} below states when it is the case, under the following additional assumptions on \(\vect{g}_n\).

Since we assume that the vectors \(\vect{f}_n\) are different, we allow vectors 
\(\vect{g}_n\) to repeat. Intuitively, we would like the vectors \(\vect{g}_n\) to be either linearly independent or equal. More formally, let 
\(\mathcal{A} \) be the set of unique vectors \(\vect{g}_n\), so \(\vect{g}_n \in \mathcal{A}\) for all \(n=0\dots N-1\) but \(M := |\mathcal{A}| \leq N\). Let \(\vect{a}_0, \dots, \vect{a}_{M-1}\) be the (unique) elements of \(\mathcal{A}\).
We will assume 
that every \(J\) elements of \(\mathcal{A}\) are linearly independent, or 
equivalently that every \(J\) elements of \(\mathcal{A}\) form a basis in 
\(\mathbb{R}^J\).

Under these assumptions, the following theorem holds:
\begin{restatable}[Basis of Bilinear Measurements]{thm}{mixed}
\label{thm:bilinear} 
Consider the set of \(N=KJ\) vectors of the form
\(\vectorised(\vect{g}_n\vect{f}_n^\top)\). It is a basis in~\(\mathbb{R}^{KJ}\) 
if and only if no more than \(K\) vectors \(\vect{g}_n\) are equal.
\end{restatable}

Before we prove Theorem \ref{thm:bilinear}, we have to introduce a few tools, including two lemmas, that we will prove in the last part of this section.

First, observe that in this theorem we assume that \(N=JK\). Clearly, any number less than this is not sufficient to reconstruct \(\vect{C}\). On the other hand, when \(N>JK\) we need only \(JK\) of the measurements to be independent,\footnote{In fact we will never get more than \(JK\) independent equations.} so we need only \(JK\) measurements to satisfy Theorem \ref{thm:bilinear}. 

We prove the case \(N=JK\), because it lets us use the properties of the determinant. Indeed, let 
\begin{equation}
\label{eq:basic_matrix}
\vect{\Gamma} = 
\begin{bmatrix}
\vectorised(\vect{g}_0\vect{f}_0^\top)^\top \\
\vectorised(\vect{g}_1\vect{f}_1^\top)^\top \\
\vdots \\
\vectorised(\vect{g}_{N-1}\vect{f}_{N-1}^\top)^\top
\end{bmatrix}.
\end{equation}
For \(N=JK\), \(\vect{\Gamma}\) is a square matrix, and thus its rows are independent if and only if its determinant is not zero. To calculate the determinant of \(\vect{\Gamma}\), we will use the specific structure of \(\vect{\Gamma}\), depicted in Figure \ref{fig:structure}.

Second, we use some properties of permutations. 
Recall that a permutation \(\vect{\sigma}\) of numbers \(0, \dots, N-1\) is 
any sequence (or vector, so we use bold letters to denote permutations) of length \(N\) in which each number from \(0\) to \(N-1\) (included) 
appears exactly once. We will call \(\mathcal{P}_N\) the set of all 
permutations of numbers  \(0, \dots, N-1\). By \(\sgn(\vect{\sigma})\) we will
denote the parity of the permutation \(\vect{\sigma}\).

Now assume that \(N=JK\). We will consider permutations \(\vect{\sigma}\) and \(\vect{\pi}\) to be equivalent if after grouping together the first \(J\) elements of the permutations, the next 
\(J\) elements of the permutation and so on, we get the same sets, see Figure 
\ref{fig:permutations}. More formally, we use the definition below:
\begin{definition}[\(\sim_{J}\)]
Two permutations \(\vect{\sigma}, \vect{\pi} \in 
\mathcal{P}_N\) are equivalent,  \(\vect{\sigma} \sim_J \vect{\pi}\), if and only if
\begin{align}
\forall_{k \in \{0, \dots, K-1\}} 
\{\sigma_j : \floor{j/J} = k\} =\{\tau_j : \floor{j/J} = k\}.
\end{align}
Note that \(\sim_{J}\) is symmetric, reflexive and transitive, so it is a proper equivalence relation.
\end{definition}

\begin{figure*}
\centering
\includegraphics[width=1\linewidth]{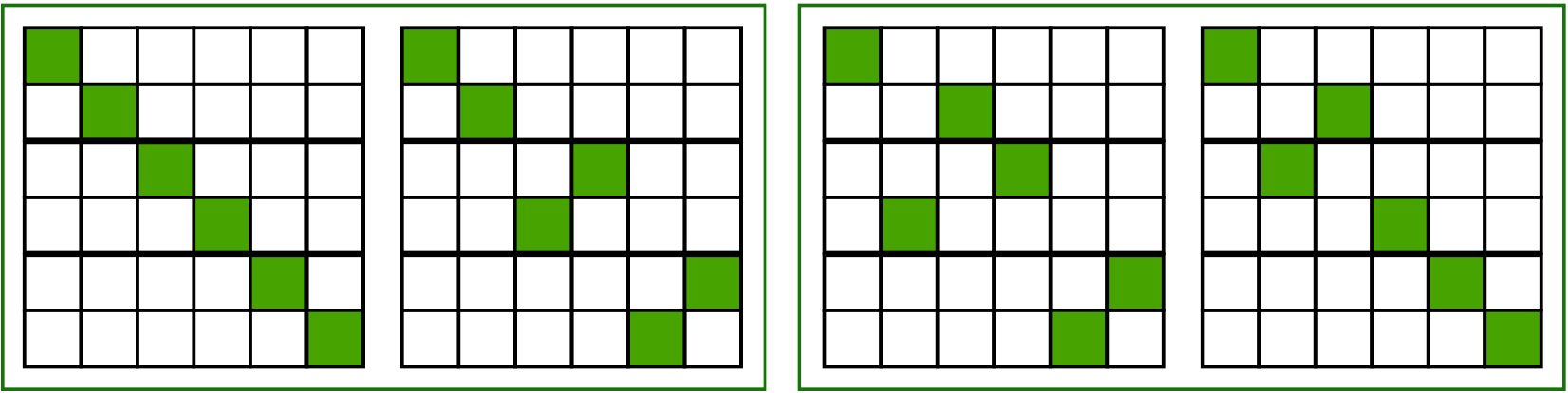}
\caption{Example of permutations of length 6. The rows correspond to the indexes \(n\) and the columns correspond to values of the permutation, so the dark squares represent pairs \((n, \sigma_n)\). 
The first permutation is the trivial/identity permutation (\(\sigma_n = n\)). The first two
permutations are equivalent with respect to \(\sim_2\) and the last two are 
equivalent with respect to \(\sim_2\), but all four are not equivalent.}
\label{fig:permutations}
\end{figure*}

We will also consider the equivalence class of \(\mathcal{P}_N\) by \(\sim_J\), 
i.e. the set of all permutations that are equivalent to the permutation
\(\vect{\sigma}\)
\begin{equation}
\left[\vect{\sigma} \right]:=\{\vect{\pi} \in \mathcal{P}_N | \vect{\pi} \sim_J 
\vect{\sigma} \}.
\end{equation}
The set of all equivalence classes is called a quotient set of \(\mathcal{P}_N\) 
by \(\sim_J\), and we denote it by \(\mathcal{P}_N / \sim_J\).

It is sometimes convenient to identify the equivalence class with one of its
elements, which is often called a \emph{representative}. This allows us to 
easily extend functions defined on permutations to classes of equivalence.
Formally, a representative can be defined using a \emph{selection function} \(\vect{s}:
\mathcal{P}_N / \sim_J \rightarrow \mathcal{P}_N\), such that 
\(\vect{s}([\vect{\sigma}]) \in [\vect{\sigma}]\). 

In our case, we choose the lexicographically first element of \([\vect{\sigma}]\) as its representative.
Using the representative, we can define \(\sgn({[\vect{\sigma}]})\) as 
\(\sgn({\vect{s}([\vect{\sigma}])})\), and \([\vect{\sigma}]_n\) as the \(n\)-th 
element of \(\vect{s}([\vect{\sigma}])\) etc.

All these properties of permutations are useful in the following lemma.
\begin{restatable}[Splitting determinant over permutations]{lemma}{permutations}
\label{lemma:permutations}
For a square matrix \(\vect{M}\in \mathbb R^{JK\times JK}\) and fixed 
permutation class \([\vect{\sigma}]\), we have
\begin{equation}
\label{eq:determinant_statement}
\det(\vect{M})
= \!\!\!\!\!\!\! \smash{\sum_{[\vect{\sigma}] \in \mathcal{P}_{JK}/\sim_J} 
\!\!\!\!\!\!\!  \sgn([\vect{\sigma}])} \prod_{k=0}^{K-1} 
\det([\vect{M}]_{[\vect{\sigma}], k}), 
\end{equation}
where by \([\vect{M}]_{[\vect{\sigma}],k}\) we denote a \(J\times J\) submatrix
of \(\vect{M}\) defined by rows \(Jk, Jk+1, \dots, J(k+1)-1\) and columns 
\([\vect{\sigma}]_{Jk}, [\vect{\sigma}]_{Jk+1}, \dots, [\vect{\sigma}]_{J(k+1)-1}\).
\end{restatable}

For the proof of Theorem \ref{thm:bilinear} we will also make use of an additional lemma stated below. This lemma is quite technical and we need it to control the measure of zeros of the determinant.

\begin{restatable}[Measure zero]{lemma}{measurezero}
\label{lemma:measurezero}
Consider a linear space of functions \(\mathcal{F}\) from \(\mathbb R\) to 
\(\mathbb R\), measurable with respect to the Lebesgue 
measure \(\lambda\) on \(\mathbb 
R\) such that for every non-zero element \(f \in \mathcal{F}\)
the set of zeros of \(f\) has measure zero. 
Now  let \(h : \mathbb R^K \rightarrow \mathbb R\) be a finite 
sum of products of non-zero element \(f_k \in \mathcal{F}\):
\begin{equation*}
h(t_0, \dots, t_{K-1}) 
= \sum_{\vect{\sigma} \in \mathcal P_K}\alpha_{\vect{\sigma}}
\prod_{k=0}^{K-1} f_{\sigma_k}(t_{k}),
\end{equation*}
where \(f_k\), \(k=0, \dots K-1\) form a linearly independent set in 
\(\mathcal F\) and \(\alpha_{\vect{\sigma}}\) are some constant coefficients.

Then, the set of zeros of \(h\) either has measure zero (with respect to the 
Lebesgue measure \(\lambda^{K}\)) or all the coefficients 
\(\alpha_{\vect{\sigma}}\) are zero.
\end{restatable}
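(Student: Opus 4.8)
The plan is to argue by induction on the number of variables $K$, exploiting the recursive structure obtained by peeling off the first coordinate $t_0$. It is convenient to prove the equivalent statement: if not all coefficients $\alpha_{\vect{\sigma}}$ vanish, then $\lambda^K(\{h=0\})=0$. The base case $K=1$ is immediate, since $\mathcal{P}_1$ consists of a single permutation, so $h(t_0)=\alpha\, f_0(t_0)$ for one scalar $\alpha$; if $\alpha\neq 0$ then $\alpha f_0$ is a nonzero element of $\mathcal{F}$, and the defining hypothesis on $\mathcal{F}$ gives $\lambda(\{h=0\})=0$.

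For the inductive step, I would group the permutations according to the index $i=\sigma_0$ assigned to the first variable, writing
\begin{equation*}
h(t_0,\vect{t}') = \sum_{i=0}^{K-1} f_i(t_0)\, g_i(\vect{t}'),
\qquad
g_i(\vect{t}') := \sum_{\vect{\sigma}:\,\sigma_0=i} \alpha_{\vect{\sigma}} \prod_{k=1}^{K-1} f_{\sigma_k}(t_k),
\end{equation*}
where $\vect{t}'=(t_1,\dots,t_{K-1})$. For fixed $i$, the restricted sequences $(\sigma_1,\dots,\sigma_{K-1})$ range over all orderings of the index set $\{0,\dots,K-1\}\setminus\{i\}$, so each $g_i$ has exactly the form required by the lemma in $K-1$ variables, built from the linearly independent family $\{f_j : j\neq i\}$. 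Since the coefficients of the $g_i$ partition the full collection $\{\alpha_{\vect{\sigma}}\}$, the assumption that some $\alpha_{\vect{\sigma}}\neq 0$ forces at least one $g_i$ to have a nonvanishing coefficient; the induction hypothesis then yields $\lambda^{K-1}(\{g_i=0\})=0$ for that index.

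I would then finish with a slicing (Tonelli) argument. Fixing $\vect{t}'$ and viewing $h(\cdot,\vect{t}')=\sum_i g_i(\vect{t}')\, f_i$ as an element of $\mathcal{F}$, linear independence of the $f_i$ shows it is nonzero unless every $g_i(\vect{t}')=0$, and in the nonzero case its zero set in $t_0$ is $\lambda$-null by the hypothesis on $\mathcal{F}$. Hence the slice $\{t_0 : h(t_0,\vect{t}')=0\}$ is null for every $\vect{t}'$ outside the exceptional set $B:=\bigcap_{i}\{g_i=0\}$. Because $B\subseteq\{g_i=0\}$ for the index produced above, $\lambda^{K-1}(B)=0$. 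Measurability of $h$ and of each $g_i$ (a finite sum of products of measurable functions of single coordinates) makes Tonelli's theorem applicable, giving
\begin{equation*}
\lambda^K(\{h=0\})
= \int_{\mathbb{R}^{K-1}} \lambda\bigl(\{t_0 : h(t_0,\vect{t}')=0\}\bigr)\, d\lambda^{K-1}(\vect{t}'),
\end{equation*}
and the integrand vanishes off the null set $B$, so the integral is $0$.

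The main obstacle is the control of the exceptional set $B$ where the $t_0$-slice fails to be null: one must recognize that simultaneous vanishing of all $g_i$ is governed by a genuinely lower-dimensional instance of the lemma, so that the induction hypothesis applies to at least one $g_i$ and forces $\lambda^{K-1}(B)=0$. The remaining work is routine bookkeeping: confirming measurability so that Tonelli is legitimate, noting that $\{f_j:j\neq i\}$ remains linearly independent, and checking that the coefficients of the $g_i$ genuinely partition the $\alpha_{\vect{\sigma}}$ so that nontriviality is inherited by some $g_i$.
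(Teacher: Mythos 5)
Your proof is correct and follows essentially the same route as the paper's: induction on $K$, splitting $h$ by the index assigned to one distinguished variable, applying the induction hypothesis to one nontrivial bracket to control the exceptional set, and finishing with a Fubini--Tonelli slicing argument. The only differences are cosmetic (you peel off $t_0$ rather than $t_{K-1}$, invoke Tonelli rather than Fubini, and start the induction at $K=1$ rather than $K=0$), and your explicit appeal to linear independence to show the slice $\sum_i g_i(\vect{t}')\,f_i$ is a nonzero element of $\mathcal{F}$ is in fact slightly more careful than the paper's wording.
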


We now prove Theorem \ref{thm:bilinear} assuming  Lemmas \ref{lemma:permutations} and \ref{lemma:measurezero}. We provide the proofs of the lemmas later, in Section \ref{sec:technical}.

\begin{proof}[Proof of Theorem \ref{thm:bilinear}]

We would like to prove a condition under which the rows of \(\vect{\Gamma}\) (defined in \eqref{eq:basic_matrix}) are 
independent, or 
equivalently when it is full row 
rank. First, observe that \(\vect{\Gamma}\) is a square matrix under the 
assumption that \(N=KJ\), and therefore it is full rank if and only if its 
determinant is not zero. 

By definition, the determinant of \(\vect{\Gamma}\) is
\begin{equation}
\label{eq:def_determinant}
\det(\vect{\Gamma}) = \sum_{\vect{\sigma}\in \mathcal{P}_N} \sgn(\vect{\sigma}) 
\prod_{n=0}^{N-1} \left[\vect{\Gamma}\right]_{\sigma_n, n}.
\end{equation}
As depicted in Figure \ref{fig:structure}, we can express one row of \(\vect{\Gamma}\) as
\begin{equation*}
\vectorised(\vect{g}_{n}\vect{f}_n^\top) = 
\vectorised \begin{bmatrix}
g_{1, n}f_0(t_n) & \dots & g_{1, n}f_{K-1}(t_n) \\
g_{2, n}f_0(t_n) & \dots & g_{2, n}f_{K-1}(t_n) \\ 
\vdots & \ddots & \vdots \\
g_{J-1, n}f_0(t_n) & \dots & g_{J-1, n}f_{K-1}(t_n) 
\end{bmatrix},
\end{equation*}
where we use 
\(f_i(t_n):=\left[\vect{f}_n\right]_i\) and \(g_{i, n}:= 
\left[\vect{g}_{n}\right]_i\) in order to simplify notation. 

\begin{figure*}
\centering
        \begin{tikzpicture}
        \node[inner sep=0pt] (picture) at (0,0)
            {\includegraphics[width = 7.3cm]{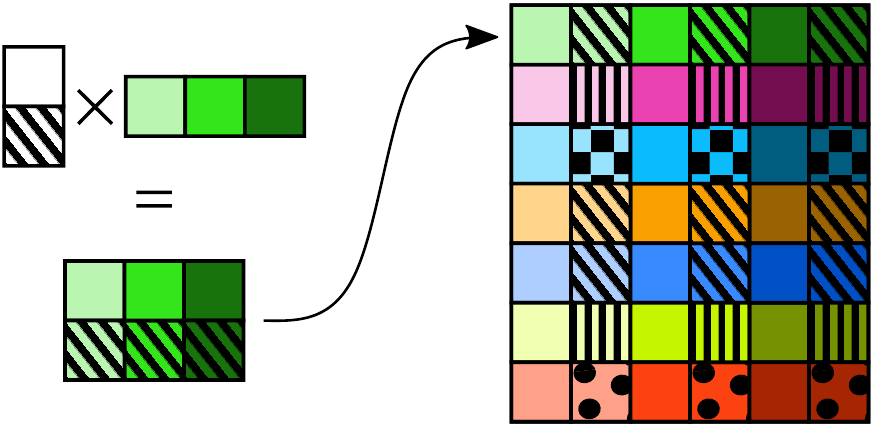}};
        \node (f) at (-1.8,1.5) {\(\vect{f}_0^\top\)};
        \node (g) at (-4,0.9) {\(\vect{g}_0\)};
        \node (fg) at (-3.7, -0.9) {\(\vect{g}_0\vect{f}_0^\top\)};
        \node (vect) at (4.8, 1.5) {\(\text{vec}(\vect{g}_0\vect{f}_0^\top)^\top\)};
        \node (vect) at (4.8, 0.9) {\(\text{vec}(\vect{g}_1\vect{f}_1^\top)^\top\)};
        \end{tikzpicture}
\caption{Structure of matrix \(\vect{\Gamma}\). For \(J=2\) and \(K=3\), the 
 \(\text{vec}(\vect{g}_n\vect{f}_n^\top)\) has length \(6\), and \(\vect{\Gamma}\) is a tall matrix for \(N>JK\). Observe, that in each row of \(\vect{\Gamma}\) every second element is a product of the second element of \(\vect{g}_n\) and some element of \(\vect{f}_n\), and similarly the first two elements of each row are products of the first element of \(\vect{f}_n\) and some element of \(\vect{g}_n\). Because of this structure, many products will repeat in the determinant.}
\label{fig:structure}
\end{figure*}

Note that the \(i\)-th element of \(\vectorised(\vect{g}_n\vect{f}_n)\) is a 
product of \(f_k(t_n)\) where \(k = \floor{i/J}\) and \(g_{j,n}\) where \(j=i 
\mod J\). Equivalently, 
by replacing row index \(i\) with multi-index \((j, k)\) such that \(i=kJ+j\), 
we can write
\begin{equation}
\label{eq:vectorised}
\left[\vectorised(\vect{g}_{n}\vect{f}_n^\top)^\top\right]_{kJ+j} = 
g_{j, n}f_k(t_n).
\end{equation}

From Lemma \ref{lemma:permutations}, we know that the determinant of 
\(\vect{\Gamma}\) can be described via determinants of its submatrices
\([\vect{\Gamma}]_{\vect{\sigma},k}\). For fixed \(k\) and \(\vect{\sigma}\), 
using the notation from \eqref{eq:vectorised}, we get
\begin{equation*}
[\vect{\Gamma}]_{\vect{\sigma},k} = \sum_{\vect{\tau} \in \mathcal{P}_J} 
\sgn(\tau) \prod_{j=0}^{J-1} g_{j, \sigma_{Jk+\tau_j}} 
f_k(t_{\sigma_{Jk+\tau_j}}),
\end{equation*}
where we replaced \(n\) with \(\sigma_{Jk+\tau_j}\).
Since \(k\) is fixed, we can factor out \(f_k(\cdot)\) and obtain
{\medmuskip=0mu
\thinmuskip=0mu
\thickmuskip=0mu
\begin{equation}
\label{eq:factorised}
[\vect{\Gamma}]_{\vect{\sigma},k} = \left(\sum_{\vect{\tau} \in \mathcal{P}_J} 
\sgn(\vect{\tau})
\prod_{j=0}^{J-1} g_{j, \sigma_{Jk+\tau_j}}\right)\left(\prod_{j=0}^{J-1} 
f_k(t_{\sigma_{Jk+j}}) \right),
\end{equation}
}
where we drop \(\vect{\tau}\) in the last brackets, because the expression 
\(\prod_{j=0}^{J-1} f_{k}(t_{\sigma_{kJ + j}})\) does not depend on the order
of the elements of the product.

Since it is true for every \(k\), then for \(\pi \in \mathcal{P}_{JK}\) we have:
\begin{equation*}
\prod_{k=0}^{K-1}\prod_{j=0}^{J-1} f_{k}(t_{{[\vect{\sigma}]}_{kJ + j}}) = \prod_{k=0}^{K-1}\prod_{j=0}^{J-1} f_{k}(t_{\pi_{kJ + j}})
\end{equation*}
if for each \(k\), \(\{[\vect{\sigma}]_{kJ + j}\}=\{\pi_{kJ + j}\} \), i.e. if \(\pi \sim_{J} \sigma\). On the other hand,
if  \(\vect{\pi}\) and \(\vect{\sigma}\) are not equivalent, the expressions might be equal at certain points, but are not equal \emph{everywhere}.
Thus, we get one expression for each class of equivalence of \( \sim_{J} \).

We can now group equal terms together and obtain the determinant in 
the following form:
\begin{equation}
\label{eq:determinant}
\det(\vect{\Gamma}(t_0, \dots, t_{N-1})) = 
\smashoperator{\sum_{[\sigma] \in \mathcal{P}_{JK}}} \gamma_{[\vect{\sigma}]}
\prod_{k=0}^{K-1}\prod_{j=0}^{J-1} f_{k}(t_{[\vect{\sigma}]_{kJ + j}}),
\end{equation}
where we explicitly state the dependence of \(\vect{\Gamma}\) on times \(t_n\) and where \(\gamma_{[\vect{\sigma}]}\) are constant coefficients, i.e. depending only 
on the vectors \(\vect{g}_n\) and permutation class \([\vect{\sigma}]\), but 
not on the parameter \(t\).

From \eqref{eq:factorised} and Lemma \ref{lemma:permutations}, we get that
\begin{equation*}
\gamma_{[\vect{\sigma}]} = 
\sgn([\vect{\sigma}])\prod_{k=0}^{K-1} \!\!\left(\smash{\sum_{\vect{\tau}\in 
\mathcal{P}_J}} \smash{\sgn(\vect{\tau})} \prod_{j=0}^{J-1} 
g_{j,[\vect{\sigma}]_{kJ + \tau_j}}\right)\!\!.
\end{equation*}
The expressions in brackets resembles the definition of the determinant.
Indeed, it is \(\det(\vect{G}_{\vect{[\sigma]}, k})\) where 
\(\vect{G}_{[\vect{\sigma}], k}\) is a matrix created by concatenating the vectors 
\(\vect{g}_{[\vect{\sigma}]_{kJ + j}}\):
\begin{equation*}
\vect{G}_{[\vect{\sigma}], k} = 
\begin{bmatrix} 
\vertbar & \vertbar &        & \vertbar \\
\vect{g}_{[\vect{\sigma}]_{kJ}} & \vect{g}_{[\vect{\sigma}]_{kJ + 1}} & \dots &
\vect{g}_{[\vect{\sigma}]_{(k+1)J-1}} \\
\vertbar & \vertbar &        & \vertbar
\end{bmatrix}.
\end{equation*}

Therefore,
\begin{equation*}
\gamma_{[\vect{\sigma}]} = 
\prod_{k=0}^{K-1} \det(\vect{G}_{[\vect{\sigma}],k}).
\end{equation*}

Now, each element of the sum in \eqref{eq:determinant} is a different function 
of \(\vect{t}\). From Lemma \ref{lemma:measurezero}, we know that either the set 
of zeros of \(\det({\vect{\Gamma}(\vect{t})})\) is measure zero, or all 
coefficients \(\gamma_{[\sigma]}\) are zero. 
The whole coefficient is zero if and only if at least one of the factors of this coefficient is zero:
\begin{equation}
\label{eq:quantifiers}
\forall_{[\vect{\sigma}] \in 
\mathcal{P}_N / \sim_J} \exists_{k\in \{0, \dots, K-1\}} \,
\det(\vect{G}_{[\vect{\sigma}], k}) = 0.
\end{equation}

What is left is to determine when \(\det(\vect{G}_{[\vect{\sigma}], k}) = 0\).
We have assumed that if all 
\(\vect{g}_{\sigma_{kJ + j}}\), \(j=0, \dots J-1\) are different they are independent, and therefore \(\det(\vect{G}_{[\vect{\sigma}], k})\) is not zero. This means that for 
\(\det(\vect{G}_{[\vect{\sigma}], k})\) to be zero, one of the vectors 
\(\vect{g}_m\) has to repeat. Equation \eqref{eq:quantifiers} is then equivalent 
to the the following statement: for each partition of \(JK\) measurements into 
\(J\)-element subsets, there exists a vector \(\vect{g}_n\) that appears at 
least twice in at least one of the measurement subsets. From the pigeonhole 
principle, for \eqref{eq:quantifiers} to hold, there has to be at least \(K+1\) equal vectors among vectors \({g}_n\). 
\end{proof}

\subsection{Technical proofs}
\label{sec:technical}
In this sections we prove Lemmas \ref{lemma:permutations} and \ref{lemma:measurezero}. Let us restate the lemmas before proving them:

\permutations*

\begin{proof}
From the definition, the determinant of an \(N\times N\) matrix is a certain sum over all permutations in \(\mathcal{P}_N\). Therefore, if \(N=JK\) it can be split into an external sum over classes of abstraction \(\mathcal{P}_{JK}/\sim_J\) and an internal sum over permutations in \([\vect{\sigma}]\):
\begin{equation*}
\det(\vect{M}) = \smashoperator{\sum_{\vect{\sigma}\in \mathcal{P}_N}} \sgn(\vect{\sigma})
\prod_{n=0}^{N-1} \left[\vect{M}\right]_{\sigma_n, n} 
= \sum_{[\vect{\sigma}]\in \mathcal{P}_{JK}/\sim_J}
{\sum_{\vect{\pi}\in [\vect{\sigma}]}} \sgn(\vect{\pi})
\prod_{n=0}^{N-1} \left[\vect{M}\right]_{{\pi_n}, n}
\end{equation*}
From now on, we will consider only the internal sum. From the definition of \(\sim_J\), if \(\pi \in [\vect{\sigma}]\) we have
\begin{equation*}
    \{[\vect{\sigma}]_{Jk},\dots, [\vect{\sigma}]_{Jk+(J-1)} \} = \{\pi_{Jk},\dots, \pi_{Jk+(J-1)}\},
\end{equation*}
for any \(k=0\dots K-1\). Therefore, we can permute each block of \(J\) elements of \([\vect{\sigma}]\) separately to obtain \(\vect{\pi}\). More formally, we can identify \(\vect{\pi} \in [\vect{\sigma}]\)  with \(K\) permutations \(\vect{\tau}^{0}, \dots \vect{\tau}^{K-1} \in \mathcal{P}_J\), such that
\begin{equation*}
    [\vect{\sigma}]_{Jk + \tau^k_j} = \pi_{Jk + j}, 
\end{equation*}
where we use the upper index to distinguish the \(k\)-th permutation from the \(k\)-th element of a permutation. 

Then, by identifying \(k=\floor{n/J}\) and \(j= n \mod J\)  we can write the internal sum over \(\vect{\pi} \in [\vect{\sigma}]\) as
\begin{gather*}
\sum_{\vect{\pi}\in [\vect{\sigma}]} \sgn(\vect{\pi})
\prod_{k=0}^{K-1}\prod_{j=0}^{J-1} [\vect{M}]_{kJ + j,\pi_{kJ + j}} \\
= \sgn([\vect{\sigma}])
\!\!\!\sum_{\vect{\tau}^0\in \mathcal{P}_J}\!\!\!\!\sgn(\vect{\tau}^0)
\dots \!\!\!\!\!
\!\!\!\sum_{\vect{\tau}^{K-1}\in \mathcal{P}_J}\!\!\!\!\!\sgn(\vect{\tau}^{K-1})
\prod_{k=0}^{K-1}\prod_{j=0}^{J-1} [\vect{M}]_{kJ + j,[\vect{\sigma}]_{kJ + \tau_j^k}}.
 \end{gather*}
Observe that each of the \(K\) products over \(j=0\dots J-1\) depends only on one 
\(\vect{\tau}^k\). Thus, for example the first product can be factored out before all sums except the sum over \(\vect{\tau}^0\):
{
\medmuskip=0mu
\thinmuskip=0mu
\thickmuskip=0mu
\small{
\begin{equation*}
\sgn([\vect{\sigma}])
\smash{\sum_{\vect{\tau}^0\in \mathcal{P}_J}}\mkern-6mu\sgn(\vect{\tau}^0) \prod_{j=0}^{J-1} 
[\vect{M}]_{j,[\vect{\sigma}]_{\tau_j^0}}
\dots \mkern-12mu \sum_{\vect{\tau}^{K-1}\in \mathcal{P}_J} \mkern-12mu \sgn(\vect{\tau}^{K-1})
\prod_{k=1}^{K-1}\prod_{j=0}^{J-1} [\vect{M}]_{kJ + j,[\vect{\sigma}]_{kJ + \tau_j^k}}.
\end{equation*}}}
Repeating this operation \(K\) times (for each \(\vect{\tau}_k\)), we get
\begin{equation*}
\sgn([\vect{\sigma}])\prod_{k=0}^{K-1} \left(\sum_{\vect{\tau}^k\in \mathcal{P}_J}
\sgn(\vect{\tau}^k) \prod_{j=0}^{J-1} 
[\vect{M}]_{kJ + j, [\vect{\sigma}]_{kJ + \tau_j^k}}\right),
\end{equation*}
where the expression in brackets is exactly \(\det(\vect{M}_{[\vect{\sigma}],j})\) as defined in the statement of the lemma.
\end{proof}

\measurezero*

\begin{proof}
We will prove Lemma \ref{lemma:measurezero} by induction over \(K\). Assume that Lemma \ref{lemma:measurezero} is true for \(K-1\), and assume that not all coefficients \(\alpha_{\vect{\sigma}}\) are zero.

Observe that \(h\) can be split in to \(K\) elements:
\begin{align*}
h(t_0, \dots, t_{K-1}) 
&= \sum_{\vect{\sigma} \in \mathcal P_K}\alpha_{\vect{\sigma}}
\prod_{k=0}^{K-1} f_{\sigma_k}(t_{k}),\\
&= \sum_{j=0}^{K-1} f_j(t_{K-1}) \left(\sum_{\substack{\vect{\sigma}\in \mathcal{P}_K \\ \sigma_{K-1}=j}}{\alpha_{\vect{\sigma}}} \prod_{k=0}^{K-2}f_{\sigma_k}(t_k)\right),
\end{align*}
where each of the \(K\) expressions in brackets has the same form as \(h\), but for \(K\) one smaller.
Indeed, by re-indexing functions \(f_k\) with indexes up to \(K-2\) (differently for different \(j\)) we can remove the condition \(\sigma_{K-1}=j\) and obtain a sum over permutations in \(\mathcal{P}_{K-1}\). 

If not all coefficients \(\alpha_{\vect{\sigma}}\) are zero, then there is \(\hat \jmath\) such that not all coefficients \(\alpha_{\vect{\sigma}}\) for \(\sigma_{K-1}= \hat\jmath\) are zero. From the Lemma for \(K-1\), the measure of the set of zeros (\(\mathcal{Z}_{\hat \jmath} \in \mathbb{R}^{K-1}\)) of the expression in brackets number \(\hat \jmath\) is zero (\(\lambda^{K-1}(\mathcal{Z}_{\hat \jmath})=0\)). The set of points for which all the expressions in brackets are zero (\(\mathcal{Z}_{all}\)) is a subset of \(\mathcal{Z}_{\hat \jmath}\), so it also has measure zero (\(\lambda^{K-1}(\mathcal{Z}_{all})=0\)), and therefore, from the definition of the Lebesgue measure, \(\lambda^{K}(\mathcal{Z}_{all}\times \mathbb R) = 0\) as well. 

For a fixed point \(\vect{s} \in \mathcal R^{K-1}\), \(\vect{s} \not \in \mathcal{Z}_{all}\), the function \(h(\vect{s}, t_{K-1})\) is a non-constantly-zero function of variable \(t_{K-1}\) that belongs to \(\mathcal{F}\), therefore the set of zeros of \(h(\vect{s},t_{K-1})\), \(\mathcal Z_{\vect{s}}\) is of measure zero.
By Fubini's Theorem, we calculate the total measure of the set of points \((\vect{s},t_{K-1})\) such that  \(h(\vect{s},t_{K-1}) = 0 \) by integrating the measure of \(\mathcal Z_{\vect{s}}\) over \(\vect{s} \not \in \mathcal{Z}_{all}\):
\begin{equation*}
    \smashoperator{\int_{\mathbb{R}^{K-1} / \mathcal{Z}_{all}}} \lambda(Z_{\vect{s}})d\vect{s} =  \smashoperator{\int_{\mathbb{R}^{K-1} / \mathcal{Z}_{all}}} 0 d\vect{s} = 0.
\end{equation*}
Thus, the total measure of zeros of \(h\) is zero for \(K\), and we have proven the induction step.

To complete the proof, we take base case \(K=0\), where \(h=\alpha_{0}\) is a constant. Then, either \(\alpha_0\) is zero and all coefficients are zero, or \(\alpha_0\) is not zero, the set of zeros of \(h\) is empty, and it has measure zero.
\end{proof}
\subsection{Application: Mixed Time Encoding}
We now apply Theorem~\ref{thm:bilinear} to provide reconstruction guarantees in the setting of mixed time encoding. For more information regarding this mixed time encoding setup, see~\cite{adam2019encoding}.

Consider a continuous-time vector signal $\vect{x}(t)$ with $N$ components $x^{(j)}(t), j = 1 \cdots J$ that are each bandlimited to $\left[ -\Omega, \Omega\right]$, such that each $x^{(j)}(t)$ can be written as a sum of $K$ $\sinc$ functions: 
\begin{equation}\label{eq: mTEM x^j definition}
    x^{(j)}(t) = \sum_{k=1}^K c_{jk}\sinc_\Omega(t - t_k),
\end{equation}
where $\sinc_\Omega(t) = \sin(\Omega t)/(\pi t)$ and $t_k = k\pi/\Omega + t_0$ for some fixed and known $t_0$. It follows that the signal is in $L^2(\mathbb{R})$ and has a well-defined integral $X^{(j)} (t):= \int_{-\infty}^t x^{(j)}(u)\, du < \infty $.

Assume $\vect{x}(t)$ is sampled as follows. First, it is passed through a mixing matrix $\vect{A} \in \mathbb{R}^{I\times J}$ producing the output $\vect{y}(t) = \vect{Ax}(t)$. Then, each of the signals $y^{(i)}(t), i=1\cdots I$ is sampled using a time encoding machine with parameters $\kappa^{(i)}$, $\delta^{(i)}$ and $b^{(i)}$, starting at time $t_0^{(i)}$, with known initial values of their integrators $\zeta_0^{(i)} = -\kappa^{(i)}\delta^{(i)}$ . The time encoding machine $i$ records times $t_\ell^{(i)}$ that satisfy $$\int_{t_\ell^{(i)}}^{t_{\ell+1}^{(i)}} y^{(i)}(u) + b^{(i)} \, du = 2\kappa^{(i)}\delta^{(i)}.$$

Defining $Y^{(i)}(t) := \int_{t_0^{(i)}}^{t} y^{(i)}(u)\, du$, we notice that the obtained time samples essentially provide amplitude samples of $Y^{(i)}(t)$:
\begin{equation*}
    Y^{(i)} (t^{(i)}_\ell) = \sum_{m=1}^\ell  \int_{t^{(i)}_{m-1}}^{t^{(i)}_{m}} y^{(i)}(u) \, du.
\end{equation*}

On the other hand, starting from the definition of the $x^{(j)}(t)$'s in~\eqref{eq: mTEM x^j definition}, we can write $Y^{(i)} (t^{(i)}_\ell)$ as follows:
\begin{align}
     Y^{(i)} (t^{(i)}_\ell) &= \sum_{j=1}^J a_{ij} \sum_{m=1}^K c_{jk} \times \int_{-\infty}^{t^{(i)}_\ell}\sinc_\Omega(u - t_k)\, du \notag\\
     Y^{(i)} (t^{(i)}_\ell) &= \vect{a}_i^T \vect{C} \vect{f}^{(i)}_{\ell}, \notag
\end{align}
where $\vect{f}^{(i)}_\ell = \left[\int_{t^{(i)}_0}^{t^{(i)}_\ell}\sinc_\Omega(u - t_k )\, du\right]_{k}$, $a_{ij}$ are the elements of the mixing matrix $\vect{A}$ and $\vect{a}_i^T = [a_{i0}, a_{i1},\cdots , a_{iJ}]$.
We adopt a change of notation, let the measurements be indexed by $n$, so that $b_n:=Y^{(i)} (t^{(i)}_\ell)$ for a known couple $(i,\ell)$ which is unique for each $n$, $\vect{g}_n := \vect{a}_i$, and $\vect{f}_n := \vect{f}^{(i)}_{\ell}$. We obtain the following equation:
\begin{equation}\label{eq: mTEM as bilinear sampling}
    b_n = \vect{g}_n^T \vect{C} \vect{f}_n.
\end{equation}

Following this reformulation, we can use Theorem~\ref{thm:bilinear} to obtain the following corollary:
\begin{corollary}
Let $\vect{x}(t)$ be composed of $J$ components such that $x^{(j)}(t)$ is as defined in~\eqref{eq: mTEM x^j definition} for some fixed $\Omega$ and for $c_{jk}$'s sampled from a Lipschitz-continuous probability distribution. Then, sampling the $\vect{y}(t)=\vect{Ax}(t)$, where $\vect{A}\in\mathbb{R}^{I\times J}$, using $I$ time encoding machines that emit $n_{\textrm{spikes}}^{(i)}$ spikes, $i = 1\cdots I$, the original signals $\vect{x}(t)$ can be perfectly recovered from the signals with probability one if
\begin{equation}\label{eq: mTEM reconstructibility condition}
    \sum_{i=1}^{I} \min\left(n_{\textrm{spikes}}^{(i)}, K\right) \geq JK.
\end{equation}
\end{corollary}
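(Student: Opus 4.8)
The plan is to reduce the corollary to a direct application of Theorem~\ref{thm:bilinear} through the bilinear reformulation~\eqref{eq: mTEM as bilinear sampling}, in which the unique measurement vectors $\vect{g}_n$ are precisely the rows $\vect{a}_i$ of the mixing matrix $\vect{A}$ and the vectors $\vect{f}_n$ are parameterized by the spike times $t_\ell^{(i)}$. First I would verify that the structural assumptions of Section~\ref{sec:assumptions} hold for this model: the coordinate functions $f_k(t)=\int_{t_0}^{t}\sinc_\Omega(u-t_k)\,du$ are real-analytic, so every nonzero element of their span $\mathcal{F}$ has only isolated zeros and hence a zero set of Lebesgue measure zero, and one checks that the $f_k$ are linearly independent, being antiderivatives of the linearly independent shifted sincs. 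I would also record the genericity hypothesis on $\vect{A}$ that matches the assumption on the $\vect{g}_n$ in Theorem~\ref{thm:bilinear}, namely that every $J$ rows of $\vect{A}$ are linearly independent (which forces $I\ge J$).

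The combinatorial heart of the argument is the counting encoded in~\eqref{eq: mTEM reconstructibility condition}. Every spike produced by machine $i$ yields the same vector $\vect{g}_n=\vect{a}_i$, so machine $i$ contributes at most $n_{\textrm{spikes}}^{(i)}$ copies of $\vect{a}_i$ to the pool of measurements. By Theorem~\ref{thm:bilinear}, the vectors $\vectorised(\vect{g}_n\vect{f}_n^\top)$ form a basis only if no more than $K$ of the $\vect{g}_n$ coincide; hence at most $\min(n_{\textrm{spikes}}^{(i)},K)$ of machine $i$'s measurements can be used simultaneously. Condition~\eqref{eq: mTEM reconstructibility condition} states exactly that $\sum_i \min(n_{\textrm{spikes}}^{(i)},K)\ge JK$, so I can select a sub-collection of exactly $JK$ measurements in which no vector $\vect{a}_i$ appears more than $K$ times and, consequently, at least $J$ distinct rows of $\vect{A}$ appear. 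For this selection the hypotheses of Theorem~\ref{thm:bilinear} are met, the corresponding $\vect{\Gamma}$ is a basis of $\mathbb{R}^{JK}$, and $\vect{C}$---and therefore $\vect{x}(t)$---is uniquely determined.

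The delicate point, and the step I expect to be the main obstacle, is transferring the ``with probability one'' conclusion. Theorem~\ref{thm:bilinear} draws its genericity from sampling times following a continuous distribution, whereas here the spike times $t_\ell^{(i)}$ are deterministic functionals of the signal and the randomness lives on the coefficients $c_{jk}$, i.e.\ the entries of $\vect{C}$. I would therefore view $\det(\vect{\Gamma})$ as a function of $\vect{C}$ obtained by composing the time-encoding map $\vect{C}\mapsto(t_\ell^{(i)})$ with the analytic dependence of $\vect{\Gamma}$ on the spike times established in the proof of Theorem~\ref{thm:bilinear}. Lemma~\ref{lemma:measurezero} guarantees that, in the spike-time variables, $\det(\vect{\Gamma})$ vanishes only on a set of Lebesgue measure zero, since the genericity of $\vect{A}$ makes at least one coefficient $\gamma_{[\vect{\sigma}]}$ nonzero. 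The remaining task is to show that the preimage of this measure-zero set under the time-encoding map has probability zero under the law of $\vect{C}$; this is where the Lipschitz-continuity of the distribution of the $c_{jk}$ enters, ensuring the relevant pushforward is absolutely continuous so that the exceptional set is avoided almost surely. Making the regularity of the implicitly-defined map $\vect{C}\mapsto(t_\ell^{(i)})$ precise enough to justify this final measure-theoretic step is the crux of the proof.
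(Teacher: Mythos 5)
Your reduction to Theorem~\ref{thm:bilinear} and the counting behind \eqref{eq: mTEM reconstructibility condition} match the paper's proof: the paper likewise identifies $\vect{g}_n=\vect{a}_i$, observes that each $\vect{a}_i$ may be used at most $K$ times (whence the $\min$ term), and verifies the zero-measure condition on $\mathcal{F}$ via bandlimitedness of the integrated sinc functions. You are also right --- and arguably more careful than the paper, whose corollary statement leaves this implicit --- to record that the hypothesis of Theorem~\ref{thm:bilinear} on the $\vect{g}_n$ requires every $J$ rows of $\vect{A}$ to be linearly independent.

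The gap is exactly where you locate it, and your proposal does not close it. Saying that the Lipschitz continuity of the law of the $c_{jk}$ should make ``the relevant pushforward absolutely continuous'' is not a reduction of the difficulty; it \emph{is} the difficulty, restated. Your route through the implicitly-defined map $\vect{C}\mapsto(t^{(i)}_\ell)$ would need genuine regularity of that map (a nondegenerate Jacobian, or local Lipschitz invertibility) to guarantee that the preimage of the Lebesgue-null zero set of $\det(\vect{\Gamma})$ has probability zero --- preimages of null sets under merely continuous maps can have positive measure --- and you concede you cannot yet supply this. The paper avoids analyzing that map altogether: it conditions on the previous spike time and writes the conditional CDF of $t^{(i)}_\ell$, as in \eqref{eq: mTEM cdf}, as the probability that a linear combination of the coefficients $d_{ik}=\sum_j a_{ij}c_{jk}$ (which inherit a Lipschitz-continuous law from the $c_{jk}$) exceeds a $t$-dependent threshold. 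Defining $h_{ik}(t)$ as in the paper, each $p_{ik}(t)=p\left(d_{ik}\leq h_{ik}(t)\right)$ is absolutely continuous in $t$, and the conditional CDF is a convolution of the absolutely continuous functions $(1-p_{ik}(t))$, hence absolutely continuous. This directly verifies the hypothesis of Theorem~\ref{thm:bilinear} that the sampling times follow a continuous distribution, after which Lemma~\ref{lemma:measurezero} does the rest inside the theorem. To complete your argument, replace the pushforward step with this conditional-CDF computation (or an equivalent one); everything before it can stand as written.
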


\begin{proof}
The connection to Theorem~\ref{thm:bilinear} is made explicit in~\eqref{eq: mTEM as bilinear sampling}. It remains to be shown that~\eqref{eq: mTEM reconstructibility condition} is a sufficient condition for the assumptions of Theorem~\ref{thm:bilinear} to hold.

First, let us examine the condition on the $\vect{g}_n$'s. We note that~\eqref{eq: mTEM reconstructibility condition} ensures that at most $K$ of the $\vect{g}_n$ vectors are equal. Indeed, the number of spikes emitted by time encoding machine $i$ determines how many times the vector $\vect{g}_n$ takes the value $\vect{a_i}$ which is set by the mixing matrix $\vect{A}$. Since each $\vect{a_i}$ can be used at most $K$ times, we obtain the $\min$ term in~\eqref{eq: mTEM reconstructibility condition}. The summation over the different components $i$ then computes the total number of available samples when taking this constraint into account.

Second, let us examine the condition on the functions $f$. These functions are integrals of the sinc functions and are thus bandlimited. Therefore, they span a space of functions $\mathcal{F}$ which comprises only of bandlimited functions and the set of zeros of $f$ thus has Lebesgue measure $\lambda$ equal to zero as required: $\lambda \lbrace t| f(t) = 0 \rbrace = 0$.

Finally, let us examine the condition on the sampling times $t_\ell^{(i)}$. These are required to  follow a continuous probability distribution.

We start by noting that the sampling times are random because the signals are generated by sampling the $c_{jk}$'s from a Lipschitz continuous probability distribution.\footnote{This is a reasonable assumption which is satisfied for example by Gaussian and uniform distributions} For simplicity we will write $y^{(i)}(t) = \sum_{k=1}^K d_{ik} \sinc_\Omega(u-t_k)$, where $d_{ik} = \sum_{j=1}^{J} a_{ij} c_{jk}$. If we assume the $a_{ij}$'s are set, the coefficients $d_{ik}$ also follow a Lipschitz continuous probability distribution.

The cumulative density function for the sampling time $t_k$ given the time $t_{k-1}$ can therefore be written as
\begin{align}
    P&\left(t^{(i)}_\ell \leq t |t^{(i)}_{\ell-1}\right) = p\left( \int_{t^{(i)}_{\ell-1}}^{t} y^{(i)}(u)\, du \geq 2\kappa^{(i)}\delta^{(i)} - b^{(i)}(t-t^{(i)}_{\ell-1})\right) \label{eq: mTEM cdf}\\
    &= p\left( \sum_\ell d_{ik} \int_{t^{(i)}_{\ell-1}}^t\sinc_\Omega(u-t_k)\, du \geq 2\kappa^{(i)}\delta^{(i)} - b^{(i)}(t-t^{(i)}_{\ell-1})\right)
\end{align}

Now consider the random variables $d_{ik}$, they follow a probability distribution $p\left(d_{ik}\leq \gamma\right)$ which is Lipschitz continuous with respect to $\gamma$. Now if we define $\gamma:=h_{ik}(t)$ where $h_{ik}(t)$ is absolutely continuous with respect to $t$, then $p\left(d_{ik}\leq h_{ik}(t)\right)$ will also be absolutely continuous with respect to $t$~\cite{royden2010real}. 

Now let us define $h_{ik}(t)= 2\kappa^{(i)}\delta^{(i)} b(t-t^{(i)}_{\ell-1})/\left( \int_{t^{(i)}_{\ell-1}}^t \sinc_\Omega(u-t_k)\, du\right)$. It is then absolutely continuous and we can write
\begin{align}
    p\left(d_{ik}\leq h_{ik}(t)\right) & = p\left(d_{ik} \leq 2\kappa^{(i)}\delta^{(i)} b(t-t^{(i)}_{\ell-1})/\left( \int_{t^{(i)}_{\ell-1}}^t \sinc_\Omega(u-t_k)\, du\right)\right)\notag\\
    & = p\left(d_{ik}\left( \int_{t^{(i)}_{\ell-1}}^t \sinc_\Omega(u-t_k)\, du\right) \leq 2\kappa^{(i)}\delta^{(i)} b(t-t^{(i)}_{\ell-1}) \right)\notag\\
    &=: p_{ik}(t) \notag
\end{align}

We now notice that the cumulative distribution function in~\eqref{eq: mTEM cdf} can be expressed as a convolution of the functions $(1-p_{ik}(t))$, all of which are absolutely continuous. Therefore, the cumulative distribution function on the $t^{(i)}_\ell$'s is absolutely continuous.
\end{proof}


\section{Quadratic measurements}
\label{sec:quadratic}

In this section, we reintroduce the quadratic term from \eqref{eq:problem_statement}. We introduce additional assumptions that let us analyse the quadratic term separately (Observation\,\ref{obs:polynomials}) and show how both terms can be connected (Lemma~\ref{lemma:polynomial_stacking}). Finally we show how to use the aforementioned results in practice, using the example of trajectory reconstruction from \cite{pacholska2019relax} (Section \ref{sec:trajectory}).

Recall that the entries \(\vect{f}_n\) are elements of the linear space of functions \(\mathcal F\). In this section, we will additionally assume that \(\mathcal F\) can be extended to a polynomial ring \(\mathcal R\) over \(\mathbb R\), \(\mathcal F \subset \mathcal R\).

This assumption might seem abstract, but it encompasses a number of widely used linear spaces of functions.
A canonical example is of course the family of polynomials \(f_k(t) = 
t^k\). We can set \(\mathcal F\) to be the space of polynomials of degree 
smaller than \(K\), and \(\mathcal R\) to be ring of polynomials 
\(\mathcal{R}[t]\). Trigonometric polynomials, that is  real symmetric
bandlimited functions on \((-\pi, \pi)\), can be extended to \(\mathcal R= \mathcal{R}[\cos(t)]\), and similarly real bandlimited functions on 
\((-\pi, \pi)\) can be extended to  \(\mathcal R = \mathcal{R}[X, Y]/_{[X^2+Y^2-1]}\), where  we identify \(X=\cos(t)\) and \(Y=\sin(t)\).
We can also take \(\mathcal F\) to be the space of complex bandlimited functions on \((-\pi, \pi)\) and \(\mathcal R\) to be \(\mathcal{R}[e^{\iu t}]\)

Let us now consider the purely quadratic term of \eqref{eq:problem_statement}:
\begin{equation}
\tilde{b}_n = \vect{f}_n^\top \vect{L}\vect{f}_n.
\end{equation}
Since the matrices \(\vect{f}_n\vect{f}_n^\top\) are symmetric, it is clear that we can have only \(K(K+1)/2\) independent equations, and can recover \(\vect{L}\) only up to a subspace (without any additional assumptions on \(\vect{L}\)). It turns out that for polynomial rings the maximal number of independent measurements is even smaller, as per the observation below:

\begin{restatable}[On Polynomial Rings]{observation}{polynomials}
\label{obs:polynomials}
Assume that \(\deg(f_k) \leq \vect{\alpha}\), for some \(\vect{\alpha} \in \mathbb N^m\).
Then the entries of \(\vect{f}_n \vect{f}_n^\top\) are also polynomials over 
\(\mathbb{F}\), of degrees \(\leq 2\vect{\alpha}\), because \(\deg(ab) = 
\deg(a) + \deg(b)\) for \(a, b \in \mathcal R\) or \(\deg(ab) = \deg(a) + \deg(b)\)
up to the appropriate modulo relation for a quotient ring.

This means that, among \(M\) different vectors \(\vectorised(\vect{f}_n 
\vect{f}_n^\top)\) of degrees up to \(\vect{\alpha}\), there are \textbf{at 
most} \((a_0 +1)(a_1+1)(a_2+1)\dots\) linearly independent vectors.

If additionally for every degree \(\vect{\beta} < \vect{\alpha}\) there is an 
index \(k\) such that \(\deg(f_k)=\vect{\beta}\), then each degree \(\leq 
2\vect{\alpha}\) will have a corresponding entry in \(\vect{f}_n \vect{f}_n^\top\).
This means that for \(f_k(t) = t^k\) there are \textbf{exactly}  \(\min(M, 
2K-1)\) linearly independent vectors. 
For a polynomial of two variables, with no equivalence relation and with 
\(\vect{\alpha} = (\sqrt{K}, \sqrt{K})\), we would get \(4K\) different degrees 
(so a \(4\)-fold increase).

For trigonometric polynomials, if \(f_k\) have degrees up to \(\vect{\alpha}=(1, 
(K-1)/2)\) (where \(K\) is odd), then the maximal degree we can get would be 
\(2\vect{\alpha}=(2, K-1)\), which would lead to \(3K\) different degrees.
But since for trigonometric polynomials we have the relation \(X^2 + Y^2 = 1\), 
the highest possible degree we can have is \((1, K-1)\), which reduces the 
possible number of different degrees to \(2K\). For a standard basis of the 
space of real, periodic bandlimited functions with bandwidth \(K\), we do not 
use polynomials of degree exactly \((1, (K-1)/2)\) (the function \(\sin(t)\cos(t(K-1)/2)\) 
is not the element of the standard basis).
Thus we never get the degree \((1, K-1)\) and the maximal number of different degrees is
\(2K-1\).
\end{restatable}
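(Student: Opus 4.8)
The plan is to collapse every claim in the Observation into a single rank computation for a factorised matrix and then count monomials. First I would expand each entry of $\vect{f}_n\vect{f}_n^\top$ in a monomial basis. Since $\deg(ab)=\deg(a)+\deg(b)$ (up to the defining relations in a quotient ring), the product $f_i f_j$ lies in the finite-dimensional space of polynomials of degree at most $2\vect{\alpha}$, which is spanned by the monomials $\{t^{\vect{\beta}} : \vect{\beta}\leq 2\vect{\alpha}\}$. Collecting these evaluations into $\vect{m}(t_n)=(t_n^{\vect{\beta}})_{\vect{\beta}\leq 2\vect{\alpha}}$, I can write
\[
\vectorised(\vect{f}_n\vect{f}_n^\top)=\vect{B}\,\vect{m}(t_n),
\]
where the coefficient matrix $\vect{B}$ depends only on the $f_k$ (its rows indexed by pairs $(i,j)$, its columns by the monomials) and not on $n$.

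The \emph{at most} bound is then immediate: every measurement vector lies in the column space of $\vect{B}$, so the number of linearly independent vectors $\vectorised(\vect{f}_n\vect{f}_n^\top)$ is at most the number of admissible monomials, $\prod_i(2\alpha_i+1)$.

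For the \emph{exactly} claim I would stack the measurements as $\vect{B}\,[\vect{m}(t_0)\,\cdots\,\vect{m}(t_{M-1})]$ and use $\mathrm{rank}(\vect{B}V)=\mathrm{rank}(V)$ whenever $\vect{B}$ has full column rank. For generic (continuously distributed) sampling times the evaluation matrix $V$ is a (generalised) Vandermonde of rank $\min(M,\#\text{monomials})$, so it remains to show $\vect{B}$ has full column rank. Under the hypothesis that every degree $\vect{\beta}<\vect{\alpha}$ is realised by some $f_k$, each target monomial of degree $\leq 2\vect{\alpha}$ is hit by at least one product $f_i f_j$; choosing one witnessing pair per monomial extracts a square submatrix of $\vect{B}$ that is triangular with nonzero diagonal, hence invertible. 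For $f_k(t)=t^k$ the entries are simply $t_n^{i+j}$, so $\vect{B}$ has disjoint-support columns and rank exactly $2K-1$, giving $\min(M,2K-1)$.

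The remaining examples are pure monomial counts: the two-variable case is a direct product, giving $(2\sqrt K+1)^2\approx 4K$, while the trigonometric case is the same count carried out in $\mathbb{R}[X,Y]/(X^2+Y^2-1)$. I expect the quotient-ring bookkeeping to be the main obstacle, since there ``degree'' is only defined modulo the ideal: I would use a reduced-monomial (Gröbner) basis so that the relation $X^2=1-Y^2$ rewrites every monomial to $X$-degree at most one, collapsing the naive count $3K$ to $2K$, and then to $2K-1$ once the absent top basis function $\sin(t)\cos(t(K-1)/2)$ is excluded. Making both the degree-additivity step and the full-column-rank argument for $\vect{B}$ precise in this reduced setting---rather than in a free polynomial ring---is the only genuinely delicate point.
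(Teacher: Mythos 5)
Your proposal is correct and is, at bottom, the same dimension-counting argument the paper makes: the Observation carries its justification inline (degree additivity bounds the entries of \(\vect{f}_n\vect{f}_n^\top\) by degree \(2\vect{\alpha}\), so the measurement vectors live in a space of monomials of dimension \(\prod_i(2\alpha_i+1)\), with exactness via distinct evaluation points in the univariate monomial case), and your factorisation \(\vectorised(\vect{f}_n\vect{f}_n^\top)=\vect{B}\,\vect{m}(t_n)\) with \(\mathrm{rank}(\vect{B}V)\) analysis is precisely the rigorous form of that count. One caution on a side claim of yours: the triangular-submatrix argument for full column rank of \(\vect{B}\) is airtight in the univariate case (disjoint column supports for \(f_k(t)=t^k\)) but can fail in several variables, since a product whose degree vector equals \(\vect{\gamma}\) need not have a nonzero coefficient on the monomial \(t^{\vect{\gamma}}\) (e.g.\ \((x+y)(x-y)=x^2-y^2\) has degree vector \((2,2)\) but no \(x^2y^2\) term); the paper sidesteps this by asserting exactness only for \(f_k(t)=t^k\) and treating the two-variable and trigonometric cases purely as \emph{at most} degree counts, so your overreach does not affect the validity of the proof of the statement as actually made.
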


This means that we rarely have enough measurements to recover (even symmetric) \(\vect{L}\). However, the low number of degrees of freedom can be advantageous if we consider both terms from \eqref{eq:problem_statement}, and are primarily interested in recovering \(\vect{C}\). We will see a specific example of this in Section \ref{sec:trajectory}.

Recall the linear system of equations corresponding to \eqref{eq:problem_statement}:
\begin{equation}
    b_n = \vectorised(\vect{g}_n \vect{f}_n^\top)\vectorised(\vect{C})
    +  \vectorised(\vect{f}_n \vect{f}_n^\top)\vectorised(\vect{L}).
\end{equation}
Theorem \ref{thm:bilinear} gives us a condition on the first term of the equation, and in this section we have analysed what is maximum number of degrees of freedom of the second term. The following lemma shows how we can combine the results.

\begin{restatable}[Expanding a matrix with polynomials]{lemma}{polynomialstacking}
\label{lemma:polynomial_stacking}
Let \(\vect{A}_i \in \mathbb R^{r\times r}\) be a full rank matrix and
let \(\vect{A}_{i+1}\) be a matrix constructed by first appending any column to
\(\vect{A}_i\) and then appending a row of the form
\begin{equation*}
\begin{bmatrix}
p_0(t) & \dots & p_{r-1}(t) &  p_{r}(t) 
\end{bmatrix},
\end{equation*}
where \(p_j\in \mathcal{R}\) is evaluated at a random time \(t\) (from a
continuous distribution) and the degree of \(p_{r}\) in at least one of the
variables is greater than the degree of any other \(p_j\) in the same variable.
Then, \(\vect{A}_{i+1}\) is full rank with probability one.
\end{restatable}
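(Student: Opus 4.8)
The plan is to treat $\det(\vect{A}_{i+1})$ as a function of the random time $t$, show that it is a \emph{nonzero} element of $\mathcal{R}$, and then use the measure-zero property of zero sets to conclude. First I would expand the determinant along the last row, which is the only row that depends on $t$. The remaining $r$ rows (the original $\vect{A}_i$ together with the appended column) are constant in $t$, so cofactor expansion yields
\begin{equation*}
\det(\vect{A}_{i+1}) = \sum_{j=0}^{r} p_j(t)\, C_j ,
\end{equation*}
where each cofactor $C_j \in \mathbb{R}$ is independent of $t$.

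The key step is to identify the cofactor $C_r$ multiplying $p_r$. Deleting the last row and the last column of $\vect{A}_{i+1}$ returns exactly $\vect{A}_i$, since the appended column is the last column and the appended row is the last row; hence $C_r = \det(\vect{A}_i) \neq 0$ because $\vect{A}_i$ is full rank by hypothesis. The other cofactors $C_j$ for $j<r$ may be arbitrary — in particular they depend on the freely chosen appended column — but they will play no role. I would then invoke the degree hypothesis to rule out cancellation: by assumption there is a variable in which $\deg(p_r)$ strictly exceeds $\deg(p_j)$ for every $j<r$, so in that variable the term $p_r C_r$ contributes a monomial of strictly higher degree than any monomial in the terms $p_j C_j$. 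Since $C_r \neq 0$, this leading monomial survives, and therefore $\det(\vect{A}_{i+1})$ is a nonzero element of $\mathcal{R}$.

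To finish, I would use that a nonzero element of $\mathcal{R}$, viewed as a function of $t$, has a zero set of Lebesgue measure zero — the standing assumption on $\mathcal{F}$, which is inherited by the polynomial ring $\mathcal{R}$ generated by it (for every ring in Observation~\ref{obs:polynomials} the elements are polynomial or real-analytic, so nonzero ones have isolated, hence measure-zero, zeros). Since $t$ is drawn from a continuous distribution, it almost surely avoids this zero set, so $\det(\vect{A}_{i+1}) \neq 0$ with probability one, i.e.\ $\vect{A}_{i+1}$ is full rank with probability one. This is the same mechanism used in Lemma~\ref{lemma:measurezero}, specialized to a single random variable.

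The main obstacle is the cancellation step in the quotient-ring case. When $\mathcal{R}$ is a quotient such as $\mathbb{R}[X,Y]/(X^2+Y^2-1)$, the phrase ``degree in a variable'' is not a priori well defined, because the relation lets one trade degree between variables; the hypothesis must be read relative to a fixed normal form (a monomial ordering), and one must verify that reducing $p_r C_r$ modulo the relation does not lower its degree enough to be matched by the lower-degree terms. For the concrete rings treated in Observation~\ref{obs:polynomials} this is precisely the degree bookkeeping already carried out there, so the argument goes through; the delicate point is formulating the general claim that the leading term of $p_r C_r$ survives passage to the quotient.
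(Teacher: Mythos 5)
Your proposal is correct and is essentially the paper's proof in a different guise: the paper expresses the appended column as \(\vect{A}_i\vect{\beta}\) and reduces rank deficiency to the single scalar equation \(p_r(t) - \begin{bmatrix} p_0(t) & \dots & p_{r-1}(t) \end{bmatrix}\vect{\beta} = 0\), which coincides with your cofactor expansion up to the nonzero factor \(\det(\vect{A}_i)\), since \(\det(\vect{A}_{i+1}) = \det(\vect{A}_i)\bigl(p_r(t) - \sum_{j<r}\beta_j p_j(t)\bigr)\). Both arguments then finish identically, using the degree hypothesis to rule out cancellation and the measure-zero zero set of a nonzero ring element to conclude almost-sure full rank; the quotient-ring subtlety you flag is genuine, but the paper's proof glosses over it in exactly the same way.
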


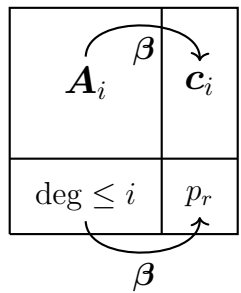
\begin{figure}[ht]
\centering
\begin{tikzpicture}[thick]
\draw (0,0) -- (3,0) -- (3,3) -- (0,3) -- (0,0);
\draw (0,1) -- (3,1);
\draw (2,0) -- (2,3);
\node[] (1) at (1, 2) {\Large{\(\vect{A}_i\)}};
\node[] (2) at (1, 0.5) {\large{\(\deg \leq i\)}};
\node[] (3) at (2.5, 0.5) {\large{\(p_r\)}};
\node[] (4) at (2.5, 2) {\Large{\(\vect{c}_i\)}};
\path[->]
(1) edge[bend left=90] node [below] {\large{\(\vect{\beta}\)}} (4)
(2) edge[bend right=90] node [below] {\large{\(\vect{\beta}\)}} (3);
\end{tikzpicture}
\caption{Building matrix \(\vect{A}_{i+1}\) from matrix \(\vect{A}_i\). For \(\vect{A}_{i+1}\) to not be full rank, \(p_r(t)\) has to be related to the rest of the row via the same combination \(\vect{\beta}\) as \(\vect{c}_i\) relates to \(\vect{A}_i\)}
\label{fig:ai}
\end{figure}

\begin{proof}
Let \(\vect{c}_i\) be the appended column.
Since \(\vect{A}_{i}\) is full rank, there is a unique linear combination 
\(\vect{\beta}\) such that \(\vect{A}_{i}\vect{\beta} = \vect{c}_i\)
see Figure \ref{fig:ai}. For \(\vect{A}_{i+1}\) to not be full rank, would mean 
that the same linear combination \(\vect{\beta}\) of the added row would have to 
be equal to the last diagonal element of \(\vect{A}_{i+1}\), \(p_r(t)\). We 
could write it as
\begin{equation}
\label{eq:row}
p_r(t) - 
\begin{bmatrix}
p_0(t) & \dots & p_{r-1}(t)
\end{bmatrix}\vect{\beta} = 0.
\end{equation}
The left hand side of this equation is not constantly zero, because there is a
variable in which \(\deg(p_r)\) is bigger than \(\deg(p_i)\), 
\(i=0, \dots r-1\) in this variable. Since we assumed that times follow a 
continuous distribution, and the measure of the set of zeros of any non-zero polynomial is 
zero, we get that the probability that \eqref{eq:row} is satisfied is zero. 
Therefore, \(\vect{A}_{i+1}\) is full rank with probability one.
\end{proof}
\subsection{Application: Continuous Localisation}
\label{sec:trajectory}
In this section we see how to use Lemma \ref{lemma:polynomial_stacking} in practice.
We consider continuous localisation from distance measurements as our example.
The problem from \cite{pacholska2019relax} is defined as follows.

\begin{figure}[ht]
    \centering
    \begin{tikzpicture}
        \node[above] (picture) at (0,0){
        \includegraphics[width=5cm]{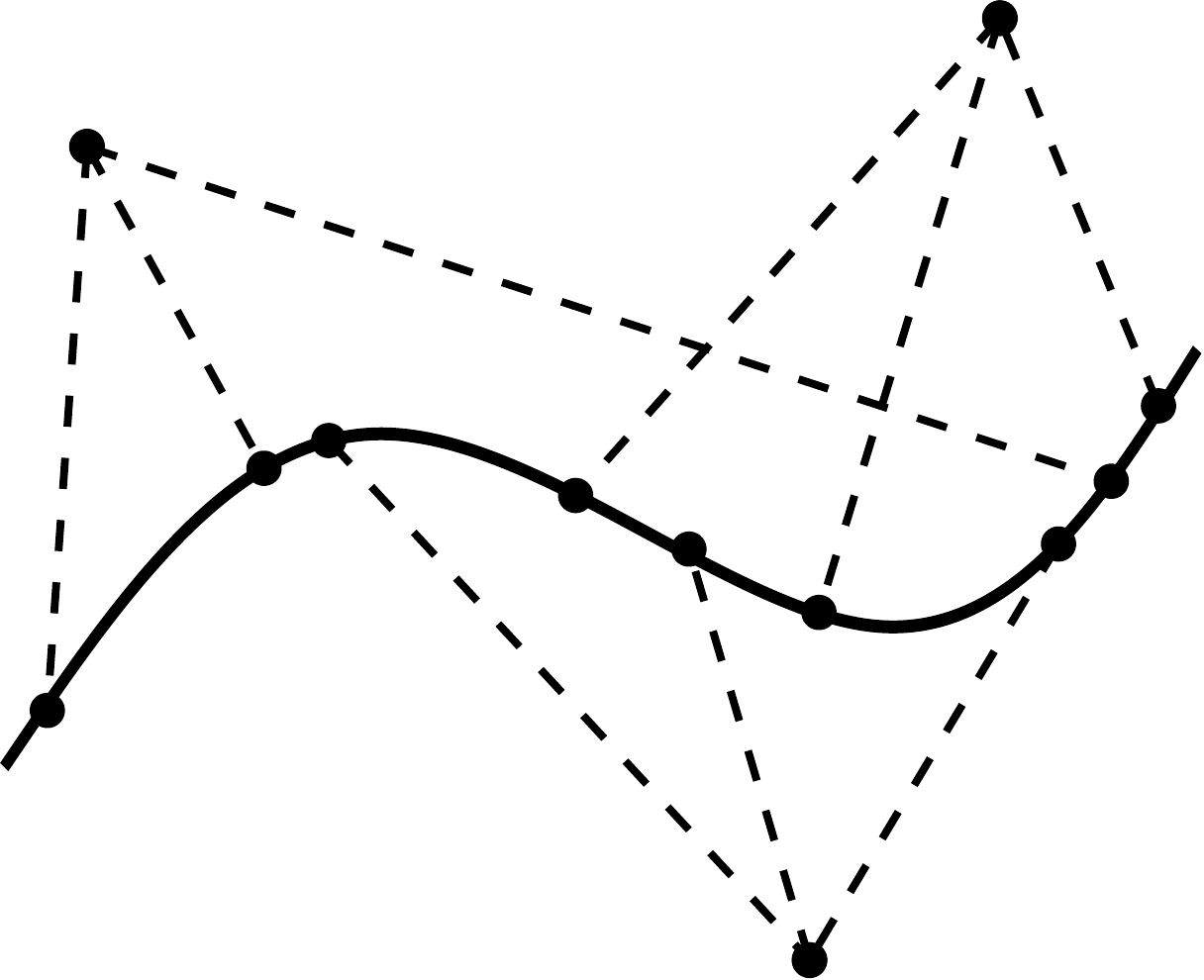}};
        \put(-60,110){$\vect{a}_0$}
        \put(30,0){$\vect{a}_1$}
        \put(-53,40){$\vect{r}(t)$}
        \put(55,120){$\vect{a}_2$}
    \end{tikzpicture}
    \caption{Continuous localisation: the device moves on trajectory \(\vect{r}\) and measures distances to anchor points \(\vect{a}_0, \vect{a}_1\) and \(\vect{a}_2\), at non uniform times. In such a setup, simple lateration will not work, because at no point are the distances to all three anchors known. Figure from \cite{pacholska2019relax}.}
    \label{fig:trajectory}
\end{figure}

We consider a device moving on a trajectory \(\vect{r} : \mathcal{I} \rightarrow \mathbb{R}^D\)
and at each time \(t_n\), \(0 \dots N-1\) distance measurements are taken to \emph{one of} the known anchors \(\vect{a}_m\), \(m=0 \dots M-1\), see Figure \ref{fig:trajectory}.
\begin{equation}
\label{eq:trajectory}
    d_n =  \|\vect{a}_{m_n} - \vect{r}(t_n)\|.
\end{equation}
For standard lateration to work, the positions of the anchors \(\vect{a}_m\) cannot lie in the same affine subspace. Our assumptions are slightly stronger -- that no \(D+1\) anchors lie in the same affine subspace.

Moreover, we assume that the robot trajectory coordinates belong to a \(K\)-dimensional 
linear space of functions \(\mathcal{F}\), and thus we can write:
\begin{equation}
\label{eq:trajectory_model}
    \vect{r}(t_n) = \vect{C}\vect{f_n}
\end{equation}
where \(\vect{f_n}\) are vectors defined as in \eqref{eq:model_form}. We assume that \(\mathcal{F}\) is either the space of polynomials of degree up to \(K\) or the space of periodic bandlimited functions. Both these spaces satisfy the assumptions from Section \ref{sec:assumptions} and can be extended to certain polynomial rings, see Section \ref{sec:quadratic}.
To recover the trajectory we need to recover a \(D\times K\) matrix of its coefficients \(\vect{C}\). For details, see \cite{pacholska2019relax}.

In the noiseless case, we can write this problem as a system of quadratic equations of the form \eqref{eq:problem_statement}:
\begin{equation}
\label{eq:trajectory_linear}
b_n  = \vect{a}_{m_n}^\top {\vect{C}}\vect{f}_n -  \frac{1}{2}
\vect{f}_n^\top\vect{L}\vect{f}_n.
\end{equation}
where \( {\vect{L}} =  {\vect{C}}^\top  {\vect{C}}\), and with  \(b_{n}:= \frac{1}{2}\left(||\vect{a}_{m_n}||^2 - d^2_n\right) \).

We then drop the relation between \(\vect{C}\) and \(\vect{L}\) and obtain a linear system of equations. Note, that any solution to \eqref{eq:trajectory} with \eqref{eq:trajectory_model} is also a solution to \eqref{eq:trajectory_linear}. Thus the \emph{unique} solution to the linearization \eqref{eq:trajectory_linear} solves the original problem.

From Observation \ref{obs:polynomials}, Theorem \ref{thm:bilinear} and Lemma \ref{lemma:polynomial_stacking} we get the following Corollary.
\begin{corollary}[Theorem 1 from \cite{pacholska2019relax}]
\label{corollary:trajectory} 
Given \(N \geq K(D+2) - 1\) measurements (at different 
times), the matrix \(\vect{C}\) can be uniquely recovered with probability one 
if: 
\begin{equation}
\sum_{m=0}^{M-1} \min(k_m, K) \geq K(D+1), 
\label{eq:full_condition} 
\end{equation}
where \(k_m\) is the number of 
measurements in which the \(m\)-th anchor is used. Moreover, if Condition 
\eqref{eq:full_condition} is not satisfied, \(\vect{C}\) cannot be uniquely 
reconstructed by solving the linear system of equations.
\end{corollary}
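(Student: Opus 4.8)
The plan is to recast the linearized system \eqref{eq:trajectory_linear} as a single homogeneous linear system in a combined unknown and to characterise exactly when its design matrix pins down \(\vect{C}\). First I fix the effective dimension. The bilinear term carries the \(DK\) entries of \(\vect{C}\) (so \(J=D\)), while by Observation \ref{obs:polynomials} the quadratic form \(\vect{f}_n^\top\vect{L}\vect{f}_n\) is a polynomial whose coefficient vector \(\vect{\ell}\) has only \(2K-1\) effective degrees of freedom (degrees \(0,\dots,2K-2\) when \(f_k(t)=t^k\), and equally \(2K-1\) frequencies in the periodic bandlimited case). Thus the unknown \((\vectorised(\vect{C}),\vect{\ell})\) lives in \(\mathbb R^{DK+2K-1}=\mathbb R^{K(D+2)-1}\), which already accounts for the count \(N\ge K(D+2)-1\). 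The crucial manoeuvre is to absorb the \emph{low-degree half} of the quadratic term into an extra anchor coordinate. Splitting \(\vect{\ell}=(\vect{\ell}^{\mathrm{low}},\vect{\ell}^{\mathrm{high}})\) into the \(K\) coefficients of degree \(\le K-1\) and the \(K-1\) coefficients of degree \(\ge K\), and using that the degree-\(\le K-1\) part of the product lies in \(\mathcal F=\mathrm{span}\{f_0,\dots,f_{K-1}\}\), the portion of each measurement of degree \(\le K-1\) reads
\[
\vect{a}_{m_n}^\top\vect{C}\vect{f}_n-\frac12(\vect{\ell}^{\mathrm{low}})^\top\vect{f}_n=\tilde{\vect{a}}_{m_n}^\top\tilde{\vect{C}}\vect{f}_n,
\]
with \(\tilde{\vect{a}}_m:=(\vect{a}_m,-\frac12)\in\mathbb R^{D+1}\) and \(\tilde{\vect{C}}\in\mathbb R^{(D+1)\times K}\) obtained by stacking \(\vect{C}\) over \((\vect{\ell}^{\mathrm{low}})^\top\). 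The degree-\(\le K-1\) block is therefore a genuine bilinear measurement system with \(J=D+1\), and the remaining degrees \(K,\dots,2K-2\) are the \(K-1\) pure monomials carrying \(\vect{\ell}^{\mathrm{high}}\).

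For the forward implication I verify the hypotheses of Theorem \ref{thm:bilinear} for this augmented system. Since every \(\tilde{\vect{a}}_m\) shares the last coordinate \(-\frac12\), any \(D+1\) distinct augmented anchors are linearly independent iff the corresponding \(\vect{a}_m\) are affinely independent, which is exactly the assumption that no \(D+1\) anchors lie in a common affine subspace; and distinct augmented anchors correspond to distinct physical anchors, so \(\tilde{\vect{a}}_m\) repeats exactly \(k_m\) times. Condition \eqref{eq:full_condition} is precisely \(\sum_m\min(k_m,K)\ge(D+1)K\), so I can select \((D+1)K\) measurements using each augmented anchor at most \(K\) times; Theorem \ref{thm:bilinear} (in its \(N>JK\) form with \(J=D+1\)) then makes the associated \((D+1)K\times(D+1)K\) block full rank with probability one. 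It remains to append the \(K-1\) high-degree columns, one at a time together with one fresh measurement each, invoking Lemma \ref{lemma:polynomial_stacking}: the new diagonal entry is a monomial of degree \(\ge K\), which strictly dominates, in the variable \(t\), every other entry of its row — these come either from the augmented-bilinear block (degree \(\le K-1\)) or from previously appended, strictly lower monomials. Each step preserves full rank with probability one, so after \(K-1\) steps we obtain a full-rank \((K(D+2)-1)\times(K(D+2)-1)\) matrix built from \((D+1)K+(K-1)=K(D+2)-1\) measurements, which is where \(N\ge K(D+2)-1\) is spent. Full column rank forces \(\vect{C}\) to be unique.

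For the converse, assume \eqref{eq:full_condition} fails, i.e. \(\sum_m\min(k_m,K)<(D+1)K\). The ``only if'' (pigeonhole) half of Theorem \ref{thm:bilinear} makes the augmented-bilinear block rank-deficient, so it has a nonzero null vector \(\delta\tilde{\vect{C}}\) with blocks \(\delta\vect{C}\) and \((\delta\vect{\ell}^{\mathrm{low}})^\top\). I claim \(\delta\vect{C}\ne0\): were \(\delta\vect{C}=0\), the null relation would reduce to \((\delta\vect{\ell}^{\mathrm{low}})^\top\vect{f}_n=0\) for all \(n\), that is, a nonzero element of \(\mathcal F\) vanishing at the \(N>K-1\) distinct sampling times, which is impossible by the measure-zero property assumed of \(\mathcal F\). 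Hence \(\delta\vect{C}\ne0\), and extending by \(\delta\vect{\ell}^{\mathrm{high}}=0\) produces two distinct matrices \(\vect{C}\) consistent with \eqref{eq:trajectory_linear}; so \(\vect{C}\) cannot be recovered by solving the linear system.

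The main obstacle is the augmentation and its bookkeeping. One must split the quadratic polynomial into a part that genuinely lies in \(\mathcal F\), hence is absorbable into a single virtual-anchor row, and a strictly higher-degree remainder; and one must check that after this absorption the leftover monomials still dominate every degree appearing in the augmented-bilinear rows, so that Lemma \ref{lemma:polynomial_stacking} applies at each stacking step. Once the problem is seen as ``a \((D+1)\)-dimensional bilinear system plus \(K-1\) degree-dominant stacking steps,'' Observation \ref{obs:polynomials}, Theorem \ref{thm:bilinear} and Lemma \ref{lemma:polynomial_stacking} apply directly; the one step that genuinely needs care is translating the affine-independence of the anchors into linear independence of the augmented anchors, and confirming that the same reduction works verbatim for both admissible choices of \(\mathcal F\) via the relevant polynomial ring.
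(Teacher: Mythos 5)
Your forward direction is essentially the paper's own argument, worked out in slightly more detail: your augmented anchors \(\tilde{\vect{a}}_m = (\vect{a}_m, -\tfrac12)\) are exactly the paper's \(\vect{g}_n = [\vect{a}_{m_n}^\top\ 1]^\top\) with \(J = D+1\) (the constant \(-\tfrac12\) is immaterial), the translation of affine independence of anchors into the every-\((D+1)\)-subset hypothesis of Theorem~\ref{thm:bilinear} is the same, and your observation that the degree-\(\le K-1\) quadratic columns already live inside the augmented bilinear block -- so that the remaining \(K-1\) high-degree columns can be appended one at a time via Lemma~\ref{lemma:polynomial_stacking} with strict degree dominance -- is precisely the paper's inductive step with \(\vect{A}_0 \in \mathbb{R}^{K(D+1)\times K(D+1)}\). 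So for sufficiency there is nothing to object to.

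The gap is in your converse, which the paper only asserts (deferring the full proof to the reference), and specifically in the claim \(\delta\vect{C} \neq 0\). You argue that \(\delta\vect{C} = 0\) would force a nonzero element of \(\mathcal F\) to vanish at the \(N\) sampling times and declare this ``impossible by the measure-zero property'' -- but a measure-zero set can perfectly well contain \(N\) points, and you cannot argue that the random times avoid the zero set of this particular function, because the null vector (hence the function \((\delta\vect{\ell}^{\mathrm{low}})^\top\vect{f}\)) is constructed from the realized times themselves, so no union bound over a fixed function applies. The conclusion is true for the admissible choices of \(\mathcal F\), but it needs a different justification: for instance, with probability one the \(N\times K\) matrix \([f_k(t_n)]_{n,k}\) has full column rank \(K\) -- apply Lemma~\ref{lemma:measurezero} to the determinant of any \(K\times K\) minor, which has exactly the form \(\sum_{\vect{\sigma}\in\mathcal{P}_K}\sgn(\vect{\sigma})\prod_k f_{\sigma_k}(t_k)\) with nonzero coefficients -- and on that probability-one event \(\delta\vect{C}=0\) forces \(\delta\vect{\ell}^{\mathrm{low}}=0\), contradicting nontriviality. (For \(\mathcal F\) the polynomials of degree \(<K\), a direct zero count also works, but that is a Chebyshev-system property, not a consequence of measure-zero zero sets.) A smaller, routine gloss in the same part: Theorem~\ref{thm:bilinear} is stated for \(N = JK\) exactly, so rank deficiency of the tall augmented block when \eqref{eq:full_condition} fails needs a one-line extension -- either note that by pigeonhole every \(JK\)-row subset then contains some \(\vect{g}\) repeated more than \(K\) times, so all maximal minors vanish, or directly bound \(\operatorname{rank} \le \sum_m \min(k_m, K)\) since the rows sharing a fixed \(\vect{g}\) lie in a \(K\)-dimensional subspace of \(\mathbb{R}^{JK}\).
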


We now provide a quick sketch of the proof of Corollary \ref{corollary:trajectory}. For the full proof see \cite{pacholska2019relax}.
First, from Observation \ref{obs:polynomials}, we know that there can be at most \(2K-1\) linearly independent vectors \(\vect{f}_n \vect{f}_n^\top\). Thus, if the vectors \(\vect{a}_{m_n} \vect{f}_n^\top\) from the bilinear part of \eqref{eq:trajectory_linear} are independent, then on the whole we can have at most \(K(D+2)-1\) independent equations, and any further measurements are redundant.

Second, identify \(\vect{g}_n = \begin{bmatrix}\vect{a}_{m_n} \\ 1\end{bmatrix}\) and \(J=D+1\). Thanks to the assumption  that no \(D+1\) anchors lie in the same affine subspace, the collection \(\vect{g}_n\) satisfies  the assumptions from Section \ref{sec:bilinear}. From Theorem \ref{thm:bilinear} we know that if among those \(N\) measurements there are \(K(D+1)\) measurements such that no anchor \(\vect{a}_{m}\) is used more than \(K\) times, then we can reconstruct \(\vect{C}\). If we calculate the number of such measurements available, we get exactly \eqref{eq:full_condition}.

Finally, we need to know that we have exactly \(K(D+2)-1\) linearly independent measurements, or equivalently that the blilinear and quadratic parts are linearly independent. This can be shown by applying Lemma \ref{lemma:polynomial_stacking} inductively over the elements of the quadratic part with \(\vect{A}_0 \in \mathbb{R}^{K(D+1)\times K(D+1)}\). Note, that since \(\vect{g}_n \) is defined by appending 1 to \(\vect{a}_{m_n}\), the matrix \(\vect{A}_0\) already contains the columns of the quadratic part of degree up to \(K-1\), so the added columns have degrees greater than \(\vect{A}_0\) and Lemma \ref{lemma:polynomial_stacking} applies.

\section*{Contributions}
AS and MP formulated the initial problem; MP generalised the problem, formulated and proved the results and wrote the manuscript; KA and MP found the connection to TEMs, KA formalised this connection and wrote Section 3.2. MV and AS advised the research.

\bibliography{main} 
\bibliographystyle{ieeetr}

\end{document}